\renewcommand{\jourvoldelim}{\addcomma\space}
\newcommand*{\RR}{\mathbb{R}}
\newcommand\restr[2]{{
  \left.\kern-\nulldelimiterspace 
  #1 
  \right|_{#2} 
}}
\newcommand{\T}{{\mathsf T}}
\newcommand{\cT}{\T^{\ast}}
\newcommand*{\dd}{\mathrm{d}}
\newcommand*{\liedv}[1]{\mathcal{L}_{#1}}
\newcommand{\Cinfty}{\mathscr{C}^\infty}
\DeclareMathOperator{\Ad}{Ad}
\begin{document}
\title{Reduction of hybrid Hamiltonian systems\\ with non-equivariant momentum maps}
\titlerunning{Reduction of hybrid Hamiltonian systems with non-equivariant momentum maps}
%

\author{Leonardo Colombo\inst{1} \and
María Emma Eyrea Iraz\'u\inst{2} \and
María Eugenia Garc\'ia\inst{2} \and\\ Asier López-Gordón\inst{3}\and
Marcela Zuccalli\inst{2}}
\authorrunning{L.~Colombo, M.~E.~Eyrea Irazú, A.~López-Gordón, and M.~E.~Zuccalli}
%
\institute{Centre for  Automation and Robotics, Spanish Research Council, Arganda del Rey, Spain.
\email{leonardo.colombo@csic.es}\\
\and
Departamento de Matem\'atica, Centro de Matemática (CMaLP), Facultad de Ciencias Exactas, Universidad Nacional de La Plata, Argentina.\\
\email{maemma@mate.unlp.edu.ar, maru@mate.unlp.edu.ar, marce@mate.unlp.edu.ar}\and
Institute of Mathematics, Polish Academy of Sciences, Warsaw, Poland. \\
\email{alopez-gordon@impan.pl}
}
\maketitle              
\begin{abstract}
We develop a reduction scheme \textit{à la} Marsden--Weinstein--Meyer for hybrid Hamiltonian systems. Our method does not require the momentum map to be equivariant, neither to be preserved by the impact map. We illustrate the applicability of our theory with an example.

\keywords{Hybrid systems - Symmetries - Reduction by symmetries - Momentum maps - Hybrid Hamiltonian systems.}
\end{abstract}
\section{Introduction}

Hybrid Hamiltonian systems are dynamical systems that combine continuous-time Hamiltonian dynamics with discrete transitions or impacts. Ames and Sastry \cite{ames_hybrid_2006} developed a reduction method \textit{à la} Marsden--Weinstein--Meyer for hybrid Hamiltonian systems with a Lie group of symmetries. However, they required the level sets of the momentum map to be preserved by the impact map. In a previous work \cite{colombo_generalized_2022}, we replaced this restrictive assumption by introducing the notion of generalized hybrid momentum map --which, roughly speaking, means that the impact map takes level sets of the momentum map into other level sets of the momentum map. In this paper, we go one step further by removing the requirement of the momentum map to be equivariant with respect to the co-adjoint action. The crucial idea is that, given a momentum map which is not equivariant with respect to the coadjoint action of a Lie group on the dual of it Lie algebra, there is a modified action, the so-called affine action with respect to which the momentum map is equivariant.

The remainder of the paper is structured as follows. Section \ref{sec2} introduces hybrid mechanical systems. In Section \ref{sec3} we introduce Hamiltonian hybrid $G$-spaces and the notion of the generalized hybrid momentum map, in order to perform the reduction by symmetries in Section \ref{sec4} for hybrid Hamiltonian systems with non-equivariant momentum map. We illustrate the reduction procedure in Section \ref{sec5} with an example. 

\section{Hybrid mechanical systems}\label{sec2}

A simple hybrid systems is a $4$-tuple $\mathcal{H}=(D, X, S, \Delta)$, where $D$ is a smooth manifold (called the \textit{domain}), $X$ is a smooth vector field on $D$, $S$ is an embedded submanifold of $D$ with co-dimension $1$ (called the \textit{switching surface} or the \textit{guard}), and $\Delta:S\to D$ is a smooth embedding (called the \textit{impact map} or the \textit{reset map}). The triple $(D,S,\Delta)$ is called \textit{hybrid manifold}.

The dynamical system generated by $\mathcal{H}$ is
\begin{equation*}\label{LHS}\Sigma_{\mathcal{H}}:\begin{cases} \dot{\gamma}(t)=X(\gamma(t))\, ,\quad\quad \gamma^{-}(t)\notin{S}\, , \\ \gamma^{+}(t)=\Delta(\gamma^{-}(t))\, ,\quad \gamma^-(t)\in{S}\, . \end{cases}\end{equation*} where $\gamma:I\subset\mathbb{R}\to D$, and $\gamma^{-}(t)\coloneqq \displaystyle{\lim_{\tau\to t^{-}}}x(\tau)$,\, $\gamma^{+}(t)\coloneqq \displaystyle{\lim_{\tau\to t^{+}}}x(\tau)$.

A solution of a simple hybrid system may experience a Zeno state if infinitely many impacts occur in a finite amount of time. In order to avoid that, we will hereafter assume that $\overline{\Delta({S})}\cap{S}=\emptyset$, where $\overline{\Delta({S})}$ denotes the closure of $\Delta({S})$, and that the set of impact times is closed and discrete. Refer to \cite{goodman_existence_2020} for details.

 The \textit{hybrid flow} $\chi^{\mathcal{H}}=(\Lambda,\mathcal{J},\mathcal{C})$ of $\mathcal{H}$ consists of:
\begin{enumerate} 
    \item A discrete indexing set $\Lambda=\{0,1,2,\ldots\}\subseteq \mathbb{N}$.
    \item A set of intervals $\mathcal{J}=\{I_{i}\}_{i\in \Lambda}$, called hybrid intervals, where $I_{i}=[\tau_{i},\tau_{i+1}]$ if $i, i+1\in \Lambda$; and $I_{N-1}=[\tau_{N-1},\tau_{N}]$ or $[\tau_{N-1},\tau_{N})$ or $[\tau_{N-1},\infty)$ if $|\Lambda|=N<\infty$, with $\tau_{i},\tau_{i+1},\tau_{N}\in \mathbb{R}$ and $\tau_{i}\leq \tau_{i+1}$,
    \item A collection of solutions $\mathcal{C}=\{c_{i}\}_{i\in \Lambda}$ for the vector field $X$ specifying the continuous-time dynamics, i.e., $\dot{c_{i}}=X(c_{i}(t))$ for all $i\in \Lambda$,
    and such that for each $i,i+1\in \Lambda$, (i) $c_{i}(\tau_{i+1})\in S$, and (ii) $\Delta(c_{i}(\tau_{i+1}))=c_{i+1}(\tau_{i+1})$.
\end{enumerate}

Let us recall that a Hamiltonian system $(D,\omega,H)$ is a symplectic manifold $(D,\omega)$ equipped with a Hamiltonian function $H\in \Cinfty(D)$. Its dynamics is given by the Hamiltonian vector field $X_H$ of $H$, determined by $\omega(X_H, \cdot) = \dd H$. Around each point $x\in D$, there is a system of local Darboux (or canonical) coordinates $(q^i, p_i)$, in which $\omega = \dd q^i \wedge \dd p_i$ and 
$\displaystyle{X_H=\frac{\partial H}{\partial p_i}\frac{\partial}{\partial q^i}-\frac{\partial H}{\partial q^i}\frac{\partial}{\partial p_i}}$. The archetype of a symplectic manifold is the cotangent bundle $\cT Q$ of a manifold $Q$ endowed with the symplectic form $\omega_Q = - \dd \theta_Q$, with $\theta_Q$ the canonical one-form. In bundle coordinates $(q^i, p_i)$, we have that $\theta_Q=p_i \dd q^i$, so they are Darboux coordinates for $\omega_Q$.

A simple hybrid system $\mathcal{H}=(D, X, {S}, \Delta)$ is said to be a \textit{simple hybrid Hamiltonian system} if $X=X_H$ is the Hamiltonian vector field associated with a Hamiltonian system $(D,\omega,H)$.

\section{Hybrid Hamiltonian $G$-spaces}
\label{sec3}

Let $G$ be a finite-dimensional Lie group, with identity element $e\in G$. 
Let $\mathfrak{g}$ be the Lie algebra of $G$, with dual $\mathfrak{g}^*$. Given a (left) Lie group action $\Phi\colon G\times D\to D$, for each $\xi \in \mathfrak{g}$, its associated infinitesimal generator on $D$ is the vector field $\xi_D\in \mathfrak{X}(D)$ given by $\xi_{D}(x)=\restr{\frac{\dd}{\dd t}}{t=0}\Phi(\exp(t\xi),x)$ for each $x\in D$. The adjoint action of $G$ on $\mathfrak{g}$ is given by $\Ad_g \xi = \restr{\frac{\dd}{\dd t}}{t=0} g \exp(t\xi) g^{-1}$ for each $g\in G$ and each $\xi\in \mathfrak{g}$. The co-adjoint action $\Ad_{g^{-1}}^\ast \mu$ of $g\in G$ on $\mu\in \mathfrak{g}^\ast$ is determined by $\langle \Ad^{*}_{g^{-1}}\mu,\xi\rangle=\langle\mu,\Ad_{g^{-1}}\xi\rangle$ for any $\xi \in \mathfrak{g}$.

Hereinafter, we will assume that all the actions considered are free and proper.

A (left) Lie group action $\Phi\colon G\times D\to D$ on a symplectic manifold $(D,\omega)$ is called symplectic if $\Phi_g$ is a symplectomorphism for each $g\in G$, i.e., $\Phi_g^\ast \omega = \omega$. If $\Phi$ is a symplectic action on $(D, \omega)$, a momentum map is a map $J\colon D \to \mathfrak{g}^\ast$ satisfying $\omega(\xi_D, \cdot) = \dd \big(\langle J, \xi \rangle\big)$ for all $\xi\in \mathfrak{g}$.
The $4$-tuple $(D,\omega, \Phi, J)$ is called a Hamiltonian $G$-space. This notion was introduced by Ames and Sastry in \cite{ames_hybrid_2006}. For each $g\in G$ and $\xi\in\mathfrak{g}$, the function $\Psi_{g,\xi}: D\rightarrow \mathbb{R}$ is defined by $$\Psi_{g,\xi}(x)=J_{\xi}(\Phi_{g}(x))-\Ad^{*}_{g^{-1}}(J_{\xi}(x))\, ,$$
where $J_\xi(x) = \langle J(x), \xi\rangle$. One can show that $\Psi_{g,\xi}$ is constant on $D$ (see, for instance, \cite{ortega_momentum_2004}). Hence, the map $\sigma:G\rightarrow \mathfrak{g}^{*}$ determined by $\langle\sigma(g),\xi\rangle = \Psi_{g,\xi}(x)$, for any $\xi \in \mathfrak{g}$ and any $x\in D$, is well-defined. The map $\sigma$ is called the \textit{co-adjoint cocycle} associated with $J$. Moreover, it can be shown that 
$$\Psi:(G, \mathfrak{g}^*)\ni(g,\mu)\mapsto \Ad_{g^{-1}}^{*} \mu+\sigma(g)\in\mathfrak{g}^*$$
is an action of $G$ on $\mathfrak{g}^{*}$, the so-called affine action, and $J$ is equivariant with respect to this action.

Consider the simple hybrid system $\mathcal{H}=(D,X,S,\Delta)$. A Lie group action $\Phi:G\times D\rightarrow D$ of $G$ on $D,$ is called a \textit{hybrid action} if   
$\restr{\Phi}{G\times S}$ is a Lie group action of $G$ on $S$ and the impact map is equivariant with respect to this action, namely, $\Delta\circ \restr{\Phi_g}{S}=\Phi_g\circ \Delta$ for all $g\in G$. Suppose that $\mathcal{H}=(D,X_H,S,\Delta)$ is a hybrid Hamiltonian system with associated Hamiltonian system $(D, \omega, H)$, and assume that the action $\Phi$ is hybrid and symplectic. A momentum map $J$ is called a \emph{generalized hybrid momentum map} for $\mathcal{H}$ if, for each regular value $\mu_-$ of $J$, and each connected component $C$ of $S$,
\begin{equation}
  \Delta \left(\restr{J}{C}^{-1}(\mu_-)  \right) \subset J^{-1}(\mu_+),
  \label{generalized_hybrid_momentum}
\end{equation}
for some regular value $\mu_+$. In particular, when $\mu_+=\mu_-$ for each $\mu^-$ (i.e., $\Delta$ preserves the momentum map), $J$ is called a hybrid momentum map. See \cite{colombo_generalized_2022} for more details. The tuple $(D,S,\Delta,\omega,\Phi,J)$ will be called a hybrid Hamiltonian $G$-space. We will call $\mu\in \mathfrak{g}^\ast$ a \emph{hybrid regular value} if it is a regular value of both $J$ and $\restr{J}{S}$.

\section{Non-equivariant reduction for hybrid Hamiltonian systems }\label{sec4}

In \cite{ames_hybrid_2006}, the Marsden--Weinstein--Meyer reduction of hybrid Hamiltonian systems was studied, while in \cite{marsden_hamiltonian_2007} the Marsden--Weinstein--Meyer reduction for non-equivariant momentum maps was developed. We use the latter result for continuous-time Hamiltonian systems to prove the existence of a reduced hybrid Hamiltonian system given a hybrid Hamiltonian system together with hybrid Hamiltonian $G$-space. Moreover, we are able to prove a relationship between the hybrid flows of these two systems.

For the hybrid manifold $D^{\mathcal{H}}=(D,{S},\Delta),$ a Lie group $G$ and $\Phi$ a hybrid action, the hybrid orbit space is 		$D^{\mathcal{H}}/G=(D/G,{S}/G,\hat{\Delta})$ where $D/G$ and $S/G$ are the orbit space of $\Phi$ and  $\Phi| _{ {S}}$, respectively, and $\hat{\Delta}: {S}/G\rightarrow D/G$ is the induced impact map. If $\Phi:G\times D\rightarrow D$ is a free and proper action and also a hybrid action, then $D^{\mathcal{H}}/G$ is a hybrid manifold and there exist a submersion  $\pi:D\rightarrow D/G$ such that the following diagram commutes:	
    \[
    \begin{tikzcd}\label{variedadesreducidas}
    D\arrow[d, "\pi"] &  {S}\arrow[d, "\pi|_{S}"]\arrow[r, "\Delta"]\arrow[l,swap,hook',"i"]& D\arrow[d, "\pi"]& \\
    D/G &  {S}/G\arrow[l,swap,hook',"i_{G}"]\arrow[r, "\hat{\Delta}"] & D/G&
    \end{tikzcd} 
    \]

Next, we refine the hybrid Marsden--Weinstein--Meyer reduction theory of \cite{ames_hybrid_2006} by relaxing certain technical conditions on hybrid momentum maps. To achieve this, we develop a framework for affine Lie group actions on hybrid momentum maps, eliminating the need for their co-adjoint equivariance. Additionally, we introduce generalized hybrid momentum maps, which extend those in \cite{ames_hybrid_2006} by allowing more flexibility beyond simply requiring that the momentum remains the same before and after impact.

\emph{Non-equivariant reduction:} We now describe how to perform reduction for a momentum map that may have a nonzero non-equivariance one-cocycle. This result is essential for the Hamiltonian reduction by stages construction (see \cite{marsden_hamiltonian_2007}). Let \( J:D\rightarrow \mathfrak{g}^* \) be a momentum map associated with the action \( \Phi \) on a connected symplectic manifold \( D \), with a non-equivariance group one-cocycle \( \sigma \). Consider the affine action \( \Psi \) on \( \mathfrak{g}^\ast \) defined by  
\begin{equation}\label{eq:modified_coadjoint_action}
    \Psi:(g,\mu)\mapsto \Ad_{g^{-1}}^{*} \mu+\sigma(g)\, .
\end{equation}
With respect to this modified action, the momentum map \( J \) becomes equivariant.  Let \( \tilde{G}_{\mu} \) denote the isotropy subgroup of \( \mu \in \mathfrak{g}^* \) under the action \( \Psi \), given by
$$\tilde{G}_{\mu}=\{g\in G: \Psi(g,\mu)=\Ad^{*}_{g^{-1}}\mu+\sigma(g)=\mu\}\, .$$
Under standard regularity assumptions--namely, that \( G \) acts freely and properly on \( D \), that \( \mu \) is a regular value of \( J \), and that \( \tilde{G}_{\mu} \) acts freely and properly on \( J^{-1}(\mu) \)--the quotient manifold \(D_{\mu} = J^{-1}(\mu)/\tilde{G}_{\mu}\) is a symplectic manifold. Its symplectic form \( \omega_{\mu} \) is uniquely determined by the condition \(i^{*}_{\mu}(\omega) = \pi^{*}_{\mu}(\omega_{\mu}),\)  
where \( i_{\mu}:J^{-1}(\mu)\rightarrow D \) is the natural inclusion and \( \pi_{\mu}:J^{-1}(\mu)\to J^{-1}(\mu)/\tilde{G}_{\mu} \) is the canonical projection. Summarizing, if $(D,\omega,\Phi,J)$ is a Hamiltonian $G$-space and the regularity assumptions above hold, then $(D_\mu, \omega_\mu)$ is a symplectic manifold, the so-called reduced phase space.

\begin{proposition}\label{Proposition:isotropy_subgroups}
    Let $(D^{\mathcal{H}},\omega,\Phi,J)$ be a hybrid Hamiltonian $G$-space. Assume that $G$ is connected, and let $\Psi$ denote the affine action \eqref{eq:modified_coadjoint_action}.
    If $\Delta$ is equivariant with respect to $\Phi$, and $\mu_-,\ \mu_+$ are regular values of $J$ such that 
    $$\Delta\left(\restr{J}{S}^{-1}(\mu_-)  \right)\subset J^{-1}(\mu_+)\, ,$$
    then the isotropy subgroups at $\mu_-$ and at $\mu_+$ under the action \( \Psi \) coincide, i.e., $\tilde{G}_{\mu_-}=\tilde{G}_{\mu_+}$.
\end{proposition}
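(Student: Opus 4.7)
I will prove the equality by establishing the two inclusions $\tilde{G}_{\mu_-}\subset\tilde{G}_{\mu_+}$ and $\tilde{G}_{\mu_+}\subset\tilde{G}_{\mu_-}$ separately, noting that the hypothesis $\Delta(\restr{J}{S}^{-1}(\mu_-))\subset J^{-1}(\mu_+)$ is asymmetric in $\mu_\pm$, so the two directions cannot be handled in the same way.

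\emph{Forward direction.} This is the direct one. Fix $g\in\tilde{G}_{\mu_-}$ and pick any $x\in\restr{J}{S}^{-1}(\mu_-)$ (assumed non-empty, for otherwise the hypothesis is vacuous). Since $\Phi$ is a hybrid action, $\Phi_g(x)\in S$, and $\Psi$-equivariance of $J$ gives $J(\Phi_g(x))=\Psi(g,\mu_-)=\mu_-$, so $\Phi_g(x)\in\restr{J}{S}^{-1}(\mu_-)$. The hybrid momentum hypothesis together with the equivariance $\Delta\circ\restr{\Phi_g}{S}=\Phi_g\circ\Delta$ then yields
\[
\Phi_g(\Delta(x))=\Delta(\Phi_g(x))\in J^{-1}(\mu_+)\, ,
\]
and applying $J$ gives $\Psi(g,\mu_+)=J(\Phi_g(\Delta(x)))=\mu_+$, whence $g\in\tilde{G}_{\mu_+}$.

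\emph{Reverse direction.} Starting from $g\in\tilde{G}_{\mu_+}$, the same chain of equalities only produces $\Phi_g(x)\in(J\circ\Delta)^{-1}(\mu_+)\cap S$, which a priori is strictly larger than $\restr{J}{S}^{-1}(\mu_-)$. To close the argument I would pass to the infinitesimal level. Using the characterization $\xi\in\tilde{\mathfrak{g}}_\mu\iff\xi_D\text{ is tangent to }J^{-1}(\mu)\text{ on }J^{-1}(\mu)$ (which is equivalent to the algebraic identity for the affine cocycle), take $\xi\in\tilde{\mathfrak{g}}_{\mu_+}$ and consider the curve $t\mapsto\Phi_{\exp(t\xi)}(x)$. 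By the argument above it stays in $(J\circ\Delta)^{-1}(\mu_+)\cap S$; a dimension count based on the regularity of $\mu_\pm$ and the embedding property of $\Delta$ identifies both $\restr{J}{S}^{-1}(\mu_-)$ and $(J\circ\Delta)^{-1}(\mu_+)\cap S$ as submanifolds of $S$ of the common dimension $\dim S-\dim\mathfrak{g}$, so the former is open in the latter. Consequently the curve remains in $\restr{J}{S}^{-1}(\mu_-)$ for small $t$; differentiating at $t=0$ gives $\xi\in\tilde{\mathfrak{g}}_{\mu_-}$. Connectedness of $G$ then lets us promote this Lie-algebra inclusion, combined with the already-established forward group inclusion, to the group equality $\tilde{G}_{\mu_-}=\tilde{G}_{\mu_+}$.

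\emph{Main obstacle.} The delicate point is the openness of $\restr{J}{S}^{-1}(\mu_-)$ inside $(J\circ\Delta)^{-1}(\mu_+)\cap S$. This reduces to showing that $\mu_+$ is a regular value of $J\circ\Delta\colon S\to\mathfrak{g}^*$, i.e.\ that the embedded submanifold $\Delta(S)$ is transverse to $J^{-1}(\mu_+)$ at points of $\Delta(\restr{J}{S}^{-1}(\mu_-))$. This transversality is not automatic from the bare assumption that $\mu_\pm$ be regular values of $J$, and is where the hybrid-regular-value structure of the problem must be invoked carefully.
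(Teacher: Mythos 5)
Your forward inclusion $\tilde{G}_{\mu_-}\subset\tilde{G}_{\mu_+}$ is exactly the paper's argument: take $x\in\restr{J}{S}^{-1}(\mu_-)$, use that $\Phi_g$ preserves this level set (by $\Psi$-equivariance of $J$ and the hybrid action on $S$), push through $\Delta$ using its equivariance, and conclude $\Psi_g(\mu_+)=\mu_+$. The problem is the reverse inclusion. Your route --- showing that $\restr{J}{S}^{-1}(\mu_-)$ is open inside $(J\circ\Delta)^{-1}(\mu_+)\cap S$ and then differentiating the curve $t\mapsto\Phi_{\exp(t\xi)}(x)$ --- hinges on $\mu_+$ being a regular value of $J\circ\Delta\colon S\to\mathfrak{g}^*$, i.e.\ on $\Delta(S)$ being transverse to $J^{-1}(\mu_+)$ along $\Delta\left(\restr{J}{S}^{-1}(\mu_-)\right)$. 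As you yourself concede, this does not follow from the stated hypotheses (freeness makes $J$ a submersion on $D$, but $\Delta(S)$ has codimension one in $D$, so the rank of $J\circ\Delta$ may drop), and you give no argument for it; moreover your dimension count for $\restr{J}{S}^{-1}(\mu_-)$ quietly uses that $\mu_-$ is a regular value of $\restr{J}{S}$ (a ``hybrid regular value''), which is not among the hypotheses of the proposition. So the proposal breaks down at its decisive step.

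The paper closes the reverse direction by a completely different argument that lives entirely in $\mathfrak{g}^*$ and never invokes $S$, $\Delta$, or any transversality: having shown $\tilde{G}_{\mu_-}\subset\tilde{G}_{\mu_+}$, it asserts that all isotropy subgroups $\tilde{G}_{\mu}$ of the affine action $\Psi$ have the same dimension, so the identity components of $\tilde{G}_{\mu_-}$ and $\tilde{G}_{\mu_+}$ must coincide, and then uses connectedness of $G$ to identify each isotropy subgroup with its identity component and conclude equality. Note that your final ``promotion'' step relies on the same tacit assumption as the paper's last line (that $\tilde{G}_{\mu_\pm}$ coincide with their identity components), so on that point you are no worse off than the paper; but unlike the paper you still need the unresolved transversality statement even to obtain the Lie-algebra inclusion $\tilde{\mathfrak{g}}_{\mu_+}\subset\tilde{\mathfrak{g}}_{\mu_-}$, so the argument as written is incomplete, and the missing idea is precisely the paper's dimension comparison of the affine isotropy subgroups, which bypasses the geometry of the guard and the impact map altogether.
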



\begin{proof}  Let $g\in \tilde G_{\mu_-}$. Then,
\begin{equation}
\begin{aligned}
  \mu_+
  &=J \circ \Delta \left(\restr{J}{S}^{-1}(\mu_-)  \right)
  = J \circ \Delta\circ \Phi_g \left(\restr{J}{S}^{-1}(\mu_-)  \right)
  \\&
  = J \circ \Phi_g \circ \Delta_H \left(\restr{J}{S_H}^{-1}(\mu_-)  \right)
  \\&
  = \Psi_g \circ J \circ \Delta_H \left(\restr{J}{S_H}^{-1}(\mu_-)  \right)
   = \Psi_g (\mu_+)
  \, ,
\end{aligned}
\end{equation}
where we have used the equivariance of $J$ and $\Delta_H$, so $g\in G_{\mu_+}$, and hence $G_{\mu_-}$ is a Lie subgroup of $G_{\mu_+}$. 

Now, observe that $G_\mu$ has the same dimension, for each $\mu\in \mathfrak g^*$. Therefore, the identity components of $G_{\mu_-}$ and $G_{\mu_+}$ coincide. If we assume that $G$ is connected, $G_{\mu_-}$ and $G_{\mu_+}$ are equal to their identity components, so $G_{\mu_-}=G_{\mu_+}$. 
\flushright$\square$
\end{proof}

\begin{theorem}\label{TEO}
    Let $(D^{\mathcal{H}},\omega,\Phi,J)$ be a hybrid Hamiltonian $G$-space. Assume that \( G \) is connected, and consider a discrete sequence $\Lambda = \left\{\mu_i  \right\}$ of regular values of $J$ such that $\Delta \left(\restr{J}{S}^{-1}(\mu_i)  \right)\subset J^{-1}(\mu_{i+1})$. Let $\tilde{G}_{\mu_i}=\tilde{G}_{\mu_0}$ be the isotropy subgroup in $\mu_i$ (for any $\mu_i$ in the sequence) under the affine action \eqref{eq:modified_coadjoint_action}. Assume that $\Phi$ and $\restr{\Phi}{\tilde{G}_{\mu}\times J^{-1}(\mu)}$ are free and proper actions. Then, for any $\mu_i \in \Lambda$,
    $$(D_{\mu_i}, S_{\mu_i}, \Delta_{\mu_i})\coloneqq \left(J^{-1}(\mu_i)/\tilde{G}_{\mu_i}, \restr{J}{S}^{-1}(\mu_i)/\tilde{G}_{\mu_i}, \restr{\Delta}{\restr{J}{S}^{-1}(\mu_i)}\right)$$
    is a hybrid manifold. 

    Furthermore, if $\mathcal{H}=(D, X, {S}, \Delta)$ is a hybrid Hamiltonian system, and the Hamiltonian $H$ is $\Phi$-invariant, then there is a reduced Hamiltonian $H_{\mu}$ and a reduced hybrid Hamiltonian system   
    $${\mathcal{H}}_{\mu}=\left(\tilde{D}_{\mu},\tilde{S}_{\mu},\tilde{\Delta}_{\mu},\tilde{X}_{H_\mu}\right)\, ,$$
    where the reduced Hamiltonian $H_{\mu}$ on $\tilde{D}_{\mu}$ is uniquely determined by $H_{\mu}\circ \pi_{\mu}=H\circ i_{\mu}$.

    The reduction scheme is summarized in the following commutative diagram:
    \begin{center}
    \begin{tikzcd}
        \cdots \arrow[r] & J^{-1}(\mu_i) \arrow[dd]        &  & \restr{J}{S}^{-1}(\mu_i) \arrow[dd] \arrow[rr, "\restr{\Delta}{J^{-1}(\mu_i)}"] \arrow[ll, hook] &  & J^{-1}(\mu_{i+1}) \arrow[dd]        & \cdots \arrow[l, hook] \\
                         & {} \arrow[d]                      &  &                                                                                               &  &                                       &                        \\
        \cdots \arrow[r] & \frac{J^{-1}(\mu_i)}{\tilde{G}_{\mu_0}} &  & S_{\mu_i} \arrow[rr, "\left(\Delta\right)_{\mu_i}"] \arrow[ll, hook]         &  & \frac{J^{-1}(\mu_{i+1})}{\tilde{G}_{\mu_0}} & \cdots \arrow[l, hook]
    \end{tikzcd}
    \end{center}
\end{theorem}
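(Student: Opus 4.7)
The plan is to combine the non-equivariant Marsden--Weinstein--Meyer reduction recalled just above the theorem with the hybrid structure, splitting the argument into three steps: (i) reduce the continuous piece $D$ at each $\mu_i$, (ii) assemble the reduced switching surfaces and induced impact maps into a hybrid manifold, and (iii) descend the Hamiltonian and its flow. The glue of the construction is Proposition \ref{Proposition:isotropy_subgroups}, which guarantees that a single group $\tilde G_{\mu_0}$ acts consistently throughout the sequence $\Lambda$.

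For step (i), I apply the non-equivariant Marsden--Weinstein--Meyer theorem to each $\mu_i \in \Lambda$: together with the freeness and properness of $\restr{\Phi}{\tilde{G}_{\mu}\times J^{-1}(\mu)}$, this yields a symplectic reduced space $(D_{\mu_i},\omega_{\mu_i})$ characterized by $i_{\mu_i}^{\ast}\omega = \pi_{\mu_i}^{\ast}\omega_{\mu_i}$. Applying Proposition \ref{Proposition:isotropy_subgroups} inductively along $\Lambda$ then gives $\tilde{G}_{\mu_i} = \tilde{G}_{\mu_0}$ for every $i$, so all the quotients in the bottom row of the diagram are taken with respect to the same group.

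For step (ii), I observe that each $\mu_i$ is a hybrid regular value, so $\restr{J}{S}^{-1}(\mu_i)$ is an embedded submanifold of $S$, and it is stable under $\tilde{G}_{\mu_i}$ by the equivariance of $J$ with respect to the affine action $\Psi$. Since the restricted action remains free and proper, $S_{\mu_i}$ is an embedded submanifold of $D_{\mu_i}$; a direct dimension count shows it has codimension one. The hybrid $\Phi$-equivariance of $\Delta$, combined with the inclusion $\Delta\bigl(\restr{J}{S}^{-1}(\mu_i)\bigr)\subset J^{-1}(\mu_{i+1})$ and the equality $\tilde{G}_{\mu_i}=\tilde{G}_{\mu_{i+1}}$, then lets me pass to quotients and obtain a smooth map $\Delta_{\mu_i}\colon S_{\mu_i}\to D_{\mu_{i+1}}$, whose smoothness follows from the submersion property of $\pi_{\mu_i}$. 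This shows that $(D_{\mu_i}, S_{\mu_i}, \Delta_{\mu_i})$ is a hybrid manifold.

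For step (iii), the $\Phi$-invariance of $H$ forces $H$ to descend to a unique smooth function $H_{\mu}$ on $D_{\mu_i}$ satisfying $H_\mu\circ\pi_\mu = H\circ i_\mu$. Noether's theorem (valid under $\Psi$-equivariance of $J$) shows that $X_H$ is tangent to $J^{-1}(\mu_i)$ and $\tilde{G}_{\mu_i}$-invariant, and hence projects to a vector field on $D_{\mu_i}$; a routine verification based on the defining identity for $\omega_{\mu_i}$ identifies this projection with the Hamiltonian vector field $\tilde{X}_{H_\mu}$ of $H_\mu$ relative to $\omega_{\mu_i}$. I expect the main technical obstacle to be the well-definedness and smoothness of the descended impact map: without the identification of isotropy groups provided by Proposition \ref{Proposition:isotropy_subgroups}, $\Delta$ would only induce a correspondence between two a priori distinct quotients, and the hybrid structure on the reduced level would not close up.
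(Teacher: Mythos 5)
Your proposal is correct and takes essentially the same route as the paper: the crucial step is Proposition \ref{Proposition:isotropy_subgroups}, which lets every level set $J^{-1}(\mu_i)$ and $\restr{J}{S}^{-1}(\mu_i)$ be quotiented by the single group $\tilde{G}_{\mu_0}$, after which the non-equivariant Marsden--Weinstein--Meyer reduction is applied at each $\mu_i$ and the equivariance of $\Delta$ is used to descend the impact map, with your steps (i)--(iii) in fact supplying more detail (the codimension count for $S_{\mu_i}$ and the descent of $H$ and $X_H$) than the paper's own very terse proof. The only point to tighten is that $\Delta_{\mu_i}$ must be an \emph{embedding}, not merely a smooth map, which follows routinely from the injectivity, immersion and equivariance properties of $\Delta$, as the paper asserts in one clause.
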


\begin{proof}
    The crucial idea is that, by Proposition~\ref{Proposition:isotropy_subgroups}, all the regular level sets of $J$ and $\restr{J}{S}$ can be quotiented by the same isotropy subgroup $\tilde{G}_{\mu_0}$. The quotient
    $S_{\mu_i}=\restr{J}{S}^{-1}(\mu_i)/\tilde{G}_{\mu_0}$ is a smooth manifold, since the $\tilde{G}_{\mu_0}$-action restricts to a free and proper action on $S$. As a matter of fact, it is a submanifold of $J^{-1}(\mu_i)/\tilde{G}_{\mu_0}$. Due to its equivariance, $\Delta$ induces an embedding $\Delta_{\mu_i}:S_{\mu_i} \rightarrow J^{-1}(\mu_{i+1})/\tilde{G}_{\mu_0}$.

    \flushright$\square$
\end{proof}

\begin{corollary}
    With ${\mathcal{H}}$ and ${\mathcal{H}}_{\mu}$ as in the previous theorem, if $\chi^{\mathcal{H}}(x_{0})$ is a hybrid flow of ${\mathcal{H}}$ with $x_{0}\in J^{-1}(\mu),$ then there is a corresponding hybrid flow  $\chi^{\mathcal{H}_{\mu}}$ of ${{\mathcal{H}_{\mu}}}$ defined by
    $\chi^{\mathcal{H}_{\mu}}(\pi_{\mu}(x_{0}))=(\Lambda,\mathcal{J},\pi_{\mu}(\mathcal{C}))$
    where $\pi_{\mu}(\mathcal{C})=\{\pi_{\mu}(c_{i}):c_{i}\in \mathcal{C}\}.$
\end{corollary}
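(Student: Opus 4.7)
The plan is to verify directly the three data items of a hybrid flow for the candidate triple $(\Lambda,\mathcal{J},\pi_\mu(\mathcal{C}))$: that the pieces $\pi_\mu(c_i)$ are integral curves of the reduced Hamiltonian vector field $\tilde{X}_{H_\mu}$, that $\pi_\mu(c_i)(\tau_{i+1})\in \tilde{S}_{\mu_i}$, and that $\tilde{\Delta}_{\mu_i}$ sends these switching points to $\pi_\mu(c_{i+1})(\tau_{i+1})$. Since the index set and the time instants $\tau_i$ are intrinsic to the unprojected flow $\chi^{\mathcal{H}}(x_0)$, $\Lambda$ and $\mathcal{J}$ require nothing further.

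First I would invoke Noether's theorem: the $\Phi$-invariance of $H$ together with the definition of the momentum map implies $\liedv{X_H} J = 0$, so each continuous piece $c_i$ stays in a single level set of $J$. Starting from $x_0\in J^{-1}(\mu)$ and writing $\mu_0=\mu$, induction on $i$ using the generalized hybrid momentum map property \eqref{generalized_hybrid_momentum} produces the sequence $\{\mu_i\}$ of Theorem~\ref{TEO}: $c_i$ lies in $J^{-1}(\mu_i)$, its endpoint $c_i(\tau_{i+1})$ lies in $\restr{J}{S}^{-1}(\mu_i)$, and $c_{i+1}(\tau_{i+1})=\Delta(c_i(\tau_{i+1}))\in J^{-1}(\mu_{i+1})$. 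By Proposition~\ref{Proposition:isotropy_subgroups}, all the isotropy subgroups $\tilde{G}_{\mu_i}$ equal a common group $\tilde{G}_{\mu_0}$, so every level set can be quotiented by the same group and the reduced spaces $D_{\mu_i}$ assembled along the reduced impact maps $\tilde{\Delta}_{\mu_i}$ sit inside a single hybrid manifold.

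Second, on each arc I would apply the non-equivariant Marsden--Weinstein--Meyer reduction recalled before Proposition~\ref{Proposition:isotropy_subgroups}. The reduced Hamiltonian $H_{\mu_i}$ is well-defined by $H_{\mu_i}\circ \pi_{\mu_i}=H\circ i_{\mu_i}$ (this uses $\Phi$-invariance of $H$), and the reduced Hamiltonian vector field satisfies $T\pi_{\mu_i}\circ X_H\circ i_{\mu_i}=\tilde{X}_{H_\mu}\circ \pi_{\mu_i}$. Since $c_i$ is an integral curve of $X_H$ contained in $J^{-1}(\mu_i)$, the projected curve $\pi_\mu(c_i)$ is then an integral curve of $\tilde{X}_{H_\mu}$ on $D_{\mu_i}$, giving the continuous-time dynamics of $\mathcal{H}_\mu$.

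Third, for the discrete transitions I would read off the commutative diagram in the statement of Theorem~\ref{TEO}: by construction $\tilde{\Delta}_{\mu_i}\circ \restr{\pi_\mu}{S}=\pi_\mu\circ \Delta$ on $\restr{J}{S}^{-1}(\mu_i)$. Evaluating at $t=\tau_{i+1}$ yields
\begin{equation}
  \tilde{\Delta}_{\mu_i}\bigl(\pi_\mu(c_i)(\tau_{i+1})\bigr)
  =\pi_\mu\bigl(\Delta(c_i(\tau_{i+1}))\bigr)
  =\pi_\mu(c_{i+1})(\tau_{i+1}),
\end{equation}
while the switching condition $c_i(\tau_{i+1})\in S$ projects to $\pi_\mu(c_i)(\tau_{i+1})\in \tilde{S}_{\mu_i}$. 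The main conceptual hurdle is exactly the one already resolved by Proposition~\ref{Proposition:isotropy_subgroups}: without a common isotropy group one cannot glue the reduced arcs into a single hybrid trajectory, because the codomain of $\tilde{\Delta}_{\mu_i}$ must coincide with the phase space containing the next arc. Once that identification is in place the verification is a direct diagram chase, and the three conditions above exhibit $(\Lambda,\mathcal{J},\pi_\mu(\mathcal{C}))$ as a hybrid flow of $\mathcal{H}_\mu$ starting at $\pi_\mu(x_0)$.
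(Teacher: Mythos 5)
Your argument is correct, and it is essentially the argument the paper leaves implicit: the corollary is stated without proof as an immediate consequence of Theorem~\ref{TEO}, and your verification (Noether's theorem to keep each arc $c_i$ in a level set of $J$, the standard projection property $T\pi_{\mu_i}\circ X_H\circ i_{\mu_i}=\tilde{X}_{H_\mu}\circ\pi_{\mu_i}$ of non-equivariant Marsden--Weinstein--Meyer reduction, and the diagram chase $\tilde{\Delta}_{\mu_i}\circ\restr{\pi_\mu}{S}=\pi_\mu\circ\Delta$ together with the common isotropy group from Proposition~\ref{Proposition:isotropy_subgroups}) is precisely the intended reasoning, spelled out in full.
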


\section{Example}\label{sec5}
Let $Q=\RR^2$, and consider $\cT Q \cong \RR^4$ endowed with the canonical symplectic form $\omega_Q = \dd q^i \wedge \dd p_i$, where $(q^i, p_i)$ are bundle coordinates induced by the canonical coordinates $(q^i)$ of $Q$. Consider the Lie group action $\Phi\colon \RR^2 \times \cT Q \to \cT Q$ of $G=\RR^2$ on $\cT Q$ given by
$$\Phi_{(a,b)} \left(q^1, q^2, p_1, p_2\right) = \left(q^1+a, q^2+a, p_1+b, p_2+b\right)\, .$$
The associated infinitesimal generators are 
$$\xi_{1}^{\cT Q} = \partial_{q^1} + \partial_{q^2}\, , \quad \xi_{2}^{\cT Q} = \partial_{p_1} + \partial_{p_2}\, .$$
Note that $\Phi$ is a symplectic action --i.e., $\Phi_{(a, b)}^\ast \omega_Q = \omega_Q$ for any $(a, b)\in \RR^2$, or, equivalently $\liedv{\xi_{1}^{\cT Q}} \omega_Q = \liedv{\xi_{2}^{\cT Q}} \omega_Q = 0$. However, it is not the cotangent lift of an action of $\RR^2$ on $Q$. Therefore, we cannot consider the natural momentum map. A momentum map $J\colon \cT Q \to \mathfrak{g}^*$ for the action $\Phi$ is given by
$$J \left(q^1, q^2, p_1, p_2 \right) = \left(p_1 + p_2, -q^1 - q^2\right)\, .$$
Here $\mathfrak{g}^\ast$ denotes the dual of the Lie algebra $\mathfrak{g} \cong \RR^2$ of $G=\RR^2$.
Since the Lie group is Abelian, the adjoint and coadjoint actions are trivial, namely,
$$\Ad_g \xi = \xi\, , \quad \Ad_g^\ast \mu = \mu\, , \quad \forall\, g\in G=\RR^2\, ,\,  \forall\, \xi\in \mathfrak{g}\, ,\, \forall\, \mu\in\mathfrak{g}^\ast\, .$$
Hence,
\begin{align*}
    J \circ \Phi_{(a, b)} \left(q^1, q^2, p_1, p_2\right)
    &- \Ad_{(a,b)^{-1}} \circ J \left(q^1, q^2, p_1, p_2\right)
    \\ &
    = J \circ \Phi_{(a, b)} \left(q^1, q^2, p_1, p_2\right)
    - J \left(q^1, q^2, p_1, p_2\right)
    \\ &
    = J \left(q^1+a, q^2+a, p_1+b, p_2+b\right)
    - J \left(q^1, q^2, p_1, p_2\right)
    \\&
    = (2b, -2a)\, ,
\end{align*}
which is independent of the point $(q^1, q^2, p_1, p_2) \in \cT Q$, and thereupon the co-adjoint cocycle is given by $\sigma(a, b) = (2b, -2a)$.

The Hamiltonian function
$$H \left(q^1, q^2, p_1, p_2\right) = \frac{(p_1-p_2)^2}{2} + V(q^{1}-q^{2})\, ,$$
where $V$ is a potential function depending only on $q^{1}-q^{2}$, is $\Phi$-invariant. Consider the hybrid Hamiltonian system $\mathcal{H}=(D,X_H,S,\Delta)$, with $X_H$ the Hamiltonian vector field of $H$, and
\begin{align*}
    & S = \left\{\left(q^1, q^2, p_1, p_2\right)\mid q^1-q^2 = c\, , \quad p_1-p_2 <0\right\}\, ,\\
    & \Delta \left(q^1, q^2, p_1, p_2\right) = \left(q^1, q^2, p_1 - \frac{1+e}{2}(p_
    1-p_2) , p_2 + \frac{1+e}{2}(p_1-p_2)\right)\, ,
\end{align*}
where $c\in \RR$ and $e\in [0, 1]$. The action $\Phi$
is an hybrid action for $\mathcal{H}$. Indeed, $\Phi$ restricts to a Lie group action of $\RR^2$ on $S$, and the impact map is equivariant with respect to this action, namely, $\Delta\circ \restr{\Phi_g}{S}=\Phi_g\circ \Delta$ for all $g\in \RR^2$. Moreover, $J$ is a hybrid momentum map, i.e., $J\circ \Delta = \restr{J}{S}$.

The isotropy subgroup with respect to the affine action $\Psi(g,\mu)=\Ad_{g^{-1}}^{*} \mu+\sigma(g)$ is given by $\tilde{G}_{\mu}=\{g=(a,b)\in \RR^2:\Ad_{g^{-1}}^{*}\mu+\sigma(a,b)=\mu\}=\{(0,0)\}$. Let $\mu=(\mu_{1},\mu_{2})\in \mathfrak{g}^{*}$ be a regular value of $J$, and consider the quotient manifold $D_{\mu}=J^{-1}(\mu)/\tilde{G}_{\mu}= J^{-1}(\mu)$, where 
\begin{align*}
J^{-1}(\mu)&=\{(q^{1},q^{2},p_{1},p_{2}):J(q^{1},q^{2},p_{1},p_{2})=\mu\}\, ,\\
&=\{(q^{1},q^{2},p_{1},p_{2}):(p_{1}+p_{2},-(q^{1}+q^{2}))=(\mu_{1},\mu_{2})\}\, .
\end{align*}
We can use $(\restr{q^2}{D_\mu}, \restr{p_2}{D_\mu})$ as coordinates in $D_\mu$. With a slight abuse of notation, we will denote them simply by $(q^2, p_2)$.
By Theorem \ref{TEO}, $\mathcal{H}_{\mu}=(D_{\mu},X_{H_{\mu}},S_{\mu},\Delta_{\mu})$ is a hybrid manifold with $X_{H_{\mu}}$ the Hamiltonian vector field of $H_{\mu}$, given by
$$H_{\mu}(q^{2},p_{2}) = \frac{(\mu_1-2p_2)^2}{2} + V(-\mu_{2}-2q^{2})\, ,$$
and
\begin{align*}
    & S_{\mu} = \left\{\left(q^2, p_2,\mu_{1},\mu_{2}\right)\mid -\mu_{2}-2q^2 = c\, , \quad \mu_{1}-2p_2 <0\right\}\, ,\\
    & \Delta_{\mu} \left(q^2, p_2,\mu_{1},\mu_{2}\right) = \left(-\mu_{2}-q^2, q^2, (\mu_{1}-p_2) - \frac{1+e}{2}(\mu_
    1-2p_2) ,
    \right. \\  & \left. \hspace{4cm} 
    p_2 + \frac{1+e}{2}(\mu_
    1-2p_2)\right)\,.
\end{align*}
\begin{credits}
\subsubsection{\ackname} L.~Colombo received financial support from Grant PID2022-137909NB-C21 funded by MICIU/AEI/10.13039/501100011033.

\subsubsection{\discintname}
The authors have no competing interests to declare.
\end{credits}
%
%
%
\renewcommand{\url}[1]{} 

\printbibliography

\end{document}